\title{\LARGE \bf
Computationally Efficient Safe Reinforcement Learning for Power Systems
}
\author{Daniel Tabas and Baosen Zhang% <-this % stops a space
\thanks{This work is partially supported by the National Science Foundation Graduate Research Fellowship Program under Grant No. DGE-1762114 and NSF grant ECCS-1930605. Any opinions, findings, conclusions, or recommendations expressed in this material are those of the authors and do not necessarily reflect the views of the National Science Foundation.}% <-this % stops a space
\thanks{Authors are with the department of Electrical and Computer Engineering, University of Washington, Seattle, WA, United States.
        {\tt\small \{dtabas, zhangbao\}@uw.edu}.}%
}
\newcommand{\R}{\mathbb{R}}
\newcommand\m[1]{\begin{bmatrix}#1\end{bmatrix}}
\newtheorem{theorem}{Theorem}
\newtheorem{lemma}{Lemma}
\theoremstyle{definition}
\newtheorem{definition}{Definition}
\newcommand{\X}{\textit{\textbf{X}}}
\newcommand{\D}{\textit{\textbf{D}}}
\newcommand{\U}{\textit{\textbf{U}}}
\renewcommand{\S}{\textit{\textbf{S}}}
\renewcommand{\P}{\textit{\textbf{P}}}
\newcommand{\Q}{\textit{\textbf{Q}}}
\renewcommand{\int}{\textbf{int}}
\newcommand{\bd}{\textbf{bd}}
\newcommand{\B}{\mathbb{B}}
\begin{document}

\maketitle
\thispagestyle{plain}
\pagestyle{plain}

%%%%%%%%%%%%%%%%%%%%%%%%%%%%%%%%%%%%%%%%%%%%%%%%%%%%%%%%%%%%%%%%%%%%%%%%%%%%%%%%
\begin{abstract}

%Large-scale adoption of renewable energy resources requires new control techniques that can optimally compensate for the variability of renewables while satisfying hard operational constraints. Data-driven approaches such as reinforcement learning (RL) provide promising ways to find control policies with good performance, but it is difficult to verify that these policies would be safe, that is, guaranteeing constraint satisfaction. 

%In this paper, we use techniques from set-theoretic control to craft a computationally efficient approach to safe RL. The proposed control policy consists of a neural network and a novel, closed-form safety filter that guarantees all actions chosen are safe. The construction of the closed-form safety filter exploits the properties of robust controlled-invariant sets represented by polytopes. The safety filter is differentiable, allowing end-to-end training of the policy using any off-the-shelf RL algorithm. We demonstrate the method on a frequency regulation problem in a 9-bus power system model and compare its performance favorably against two baseline control techniques.

We propose a computationally efficient approach to safe reinforcement learning (RL) for frequency regulation in power systems with high levels of variable renewable energy resources. The approach draws on set-theoretic control techniques to craft a neural network-based control policy that is guaranteed to satisfy safety-critical state constraints, without needing to solve a model predictive control or projection problem in real time. By exploiting the properties of robust controlled-invariant polytopes, we construct a novel, closed-form ``safety-filter'' that enables end-to-end safe learning using any policy gradient-based RL algorithm. We then apply the safety filter in conjunction with the deep deterministic policy gradient (DDPG) algorithm to regulate frequency in a modified 9-bus power system, and show that the learned policy is more cost-effective than robust linear feedback control techniques while maintaining the same safety guarantee. We also show that the proposed paradigm outperforms DDPG augmented with constraint violation penalties.

\end{abstract}

%%%%%%%%%%%%%%%%%%%%%%%%%%%%%%%%%%%%%%%%%%%%%%%%%%%%%%%%%%%%%%%%%%%%%%%%%%%%%%%%

\section{INTRODUCTION}

Power systems are a quintessential example of safety-critical infrastructure, in which the violation of operational constraints can lead to large blackouts with high economic and human cost. As variable renewable energy resources are integrated into the grid, it becomes increasingly important to ensure that the system states, such as generator frequencies and bus voltages, remain within a ``safe'' region defined by the operators \cite{chen2021}.

The design of safe controllers concerns the ability to ensure that an uncertain dynamical system will satisfy hard state and action constraints during execution of a control policy~\cite{mitchell2005time,lygeros2004reachability}. Recently, set-theoretic control \cite{Blanchini2015} has been applied to a wide range of safety-critical problems in power system operation \cite{El-Guindy2017,Zhang2020}. This approach involves computing a \emph{robust controlled-invariant set} (RCI) along with an associated control policy which is guaranteed to keep the system state inside the RCI \cite{Blanchini2015,El-Guindy2017a,El-Guindy2018}. If the RCI is contained in the feasible region of the (safety-critical) state constraints, then the associated control policy is considered to be safe.

However, the set-theoretic approach requires several simplifying assumptions for tractability, leading to controllers with suboptimal performance. First, the disturbances to the system are assumed to be bounded in magnitude but otherwise arbitrary \cite{Blanchini2015,Zhang2020}. Second, the RCIs must be restricted to simple geometric objects such as polytopes or ellipsoids \cite{Maidens2013}. Third, many approaches select an RCI and control policy in tandem, which usually requires the control policy to be linear and forces a tradeoff between performance and robustness \cite{Liu2019a,Blanchini1997,Nguyen1999}. Fourth, nonlinear systems must be treated as linear systems plus an unknown-but-bounded linearization error \cite{El-Guindy2017}.

Once an RCI is generated using the conservative assumptions listed above, data-driven approaches can use learning to improve performance with respect to the true behavior of the disturbances and nonlinearities without risk of taking unsafe actions \cite{Wabersich2021, Aswani2013,Gros2020,Cheng2019}. However, these techniques require solving a model predictive control (MPC) or projection problem each time an action is executed, which may be too computationally expensive. Several approaches that avoid repeatedly solving an optimization problem have also been proposed. One such approach involves tracking the vertices of the set of safe actions, and using a neural network to specify an action by choosing convex weights on these vertices. However, this is only possible when the RCI has exceedingly simple geometry \cite{Zheng2020}. Other strategies only guarantee safety in expectation, and do not rule out constraint violations in every situation \cite{pmlr-v70-achiam17a,Yu2019}. Controllers with Lyapunov stability or robust control guarantees have also been proposed~\cite{Cui2021,cui2021a,Donti2021}, but stability does not always translate to constraint satisfaction.

In this paper, we present a method to design safe, data-driven, and closed-form control policies for frequency regulation in power systems. Our approach combines the advantages of set-theoretic control and learning. In particular, we use simple linear controllers to find a maximal RCI, and then use reinforcement learning (RL) to train a neural network-based controller that improves performance while maintaining safety. The safety of this control policy is accomplished by constraining the output of the neural network to the present set of safe actions. By leveraging the structure of polytopic RCIs, we construct a closed-form \textit{safety filter} to map the neural network's output into the safe action set without solving an MPC or projection problem. The safety filter is differentiable, allowing end-to-end training of the neural network using any policy gradient-based RL algorithm. We demonstrate our proposed control design on a frequency regulation problem in a 9-bus power system model consisting of several generators, loads, and inverter-based resources (IBRs). The simulation results demonstrate that our proposed policy maintains safety and outperforms safe linear controllers without repeatedly solving an optimization problem in real time.

We focus on applying our algorithm to the problem of primary frequency control in power systems. Frequency is a signal in the grid that indicates the balance of supply and demand. Generators typically respond to the change in frequency by adjusting their power output to bring the frequency back to nominal (e.g., 60 Hz in the North American system)~\cite{Kundur1994,zhao2014design}. For conventional generators, these responses are limited to be linear (possibly with a dead-band). In contrast, IBRs such as solar, wind and battery storage can provide almost any desired response to frequency changes, subject to some actuation constraints~\cite{ademola2020frequency}. Currently, however, these resources still use linear responses, largely because of the difficulty in designing nonlinear control laws. Recently, RL based methods have been introduced in the literature~(see, e.g.~\cite{Chen2021rl} and the references within). However, most approaches treat safety and constraint satisfaction as soft penalties, and cannot provide any guarantees~\cite{Chen2021rl,yan2018data,latif2020state}.

% Give an outline of the paper.
The rest of the paper is organized as follows. Section \ref{sec:2} introduces the power system model and formulates the problem of safety-critical control from a set-theoretic perspective. Section \ref{sec:3} describes the proposed controller design. Section \ref{sec:4} presents simulation results for the modified 9-bus power system.
\section{MODEL AND PROBLEM FORMULATION} \label{sec:2}

\subsection{Model assumptions}

% Introduce model: $x_{t+1} = Ax_t + Bu_t + Ed_t, d_t \in \D$
In this paper we are interested in a linear system with control inputs and disturbances. We write the system evolution as \begin{equation} \label{eqn:linear_system}
x_{t+1} = Ax_t + Bu_t + Ed_t,
\end{equation}
where $x_t \in \R^n$, $u_t \in \R^m$ and $d_t \in \R^p$ are vectors of the state variables, control inputs, and disturbances at time $t$. We assume the disturbance $d_t$ is bounded but otherwise can take arbitrary values. More precisely, we assume that $d_t$ lies in a compact set. This boundedness assumption on $d_t$ is fairly general, since it allows the disturbances to capture uncontrolled input into the system, model uncertainties in $A$, $B$, and $E$, and linearization error. For more compact notation, we will sometimes summarize \eqref{eqn:linear_system} as $x^+= f(x_t,u_t,d_t)$.

% Write here that $u \in U$ and $d \in D$. Define polytopes in-line.
The constraints on inputs are $u_t \in \U \subset \R^m$ and $d_t \in \D \subset \R^p$ for all $t$. The sets $\U$ and $\D$ are assumed to be polytopes, defined as the bounded intersection of a finite number of halfspaces or linear inequalities \cite{Boyd2009}. Specifically, $\U$ and $\D$ are defined as \begin{align}
    \U &= \{u \in \R^m \mid -\bar{u} \leq V_u u \leq \bar{u}\} \text{ and} \label{eqn:10-7-3}\\
    \D &= \{d \in \R^p \mid -\bar{d} \leq V_d d \leq \bar{d}\}. \label{eqn:10-7-4}
\end{align}

% Describe state constraints for safety-critical control
In safety-critical control problems such as frequency regulation, operators want to keep the system states within hard constraints. For example, frequencies are generally kept within a tenth of a hertz of the nominal frequency and rotor angle deviations are limited for stability considerations~\cite{Kundur1994}. We use the set \begin{align}
    \X &= \{x \in \R^n \mid -\bar{x} \leq V_x x \leq \bar{x}\} 
\end{align} to denote the constraints that the state $x$ must satisfy in real-time. 

\subsection{Safety-critical control}
    % Motivate and define concept of RCI. RCI is defined in a definition environment.
    Because of the presence of disturbances, it may not be possible for the system state to always remain in $\X$. Some states close to the boundary of $\X$ could be pushed out by a disturbance no matter the control action, while for other states in $\X$, there may exist a control action such that no disturbance would push the state outside of the prescribed region. This motivates the definition of a \emph{robust controlled-invariant set}.
    
    \begin{definition}[Robust controlled-invariant set (RCI) \cite{Blanchini2015}]
    An RCI is a set $\S$ for which there exists a feedback control policy $u_t = \pi_0(x_t) \in \U$ ensuring that all system trajectories originating in $\S$ will remain in $\S$ for all time, under any disturbance sequence $d_t \in \D$.
    \end{definition}

    % Explain in paragraph form why and how we choose $(\S,\pi_0)$.
If $\S$ is contained in $\X$, then $\pi_0$ is a safe policy. Often, the goal is to find the policy that maximizes the size of $\S$ while being contained in $\X$, since it corresponds to making most of the acceptable states safe \cite{Maidens2013}.  In general, this is a difficult problem. Fortunately, if we restrict the policy to be linear, there are many well-studied techniques that have been shown to be successful at producing large safety sets \cite{Liu2019a,Liu2015,Tahir2010,Blanco2010}. 
%\todo{cite something}
    
    % However, the choice of $(\S,\pi_0)$ is not unique even under the restriction that $\S \subseteq \X$, but is subject to considerations for complexity and robustness \cite{Maidens2013}. 
    In this paper, we assume that $\S$ is a polytope described by $2r$ linear inequalities, and that $\pi_0$ is a linear feedback control policy. Specifically, we assume \begin{gather}
        \S = \{x \in \R^n \mid -\bar{s} \leq V_sx \leq \bar{s}\} \subseteq \X \text{ and} \label{eqn:10-7-1} \\
        \pi_0(x) = Kx \label{eqn:10-8-1}
    \end{gather} where $V_s \in \R^{r \times n}$, $\bar{s} \in \R^r$, and $Kx \in \U$ for all $x \in \S.$ For robustness, we choose the largest RCI satisfying \eqref{eqn:10-7-1}. The algorithm used for choosing $(\S,\pi_0)$ is described in \cite{Liu2015}. The algorithm uses a convex relaxation to find an \emph{approximately} maximal RCI $\S$ and an associated $K$ as the solution to an SDP. The objective of the SDP is to maximize the volume of the largest inscribed ellipsoid inside $\S$. 
    %\todo{Give one or two sentence about the algorithm and its performance.}
    
    % Define safe action set $\Omega(x)$ in paragraph form. Give $\Omega(x)$ its own equation with a number. Explain the difficulty of constraining an action to $\Omega(x)$.
    Of course, a linear policy that maximizes the size of $\S$ may not lead to satisfactory control performance. The set $\S$ is chosen jointly with the policy $\pi_0$, but there could be many policies (not necessarily linear) that keep $\S$ robustly invariant. We want to optimize over this class of nonlinear policies to improve the performance of the system.
    % Not only because of the additional flexibility afforded in being able to choose a potentially nonlinear controller, but also because $K$ is chosen to maximize $\S$ without consideration for optimizing the performance of the system.
    % However, note that while $\S$ is depends on the policy $\pi_0$, but there could may policies that achieves the same safety set $\S$.
    %  a fixed RCI $\S$, there may be many safe policies that have better performance than $Kx$: not only because of the additional flexibility afforded in being able to choose a nonlinear controller, but also because $K$ is chosen to maximize $\S$ without consideration for optimizing the performance of the system. 
     To explore the full range of safe policies, we define the \emph{safe action set at time $t$} as \begin{align}
        \Omega(x_t) := \{u_t \in \U \mid x_{t+1} \in \S, \ \forall\ d_t \in \D\} \label{eqn:10-8-2}
    \end{align} where it is assumed that $x_t \in \S$. By induction, any policy that chooses actions from $\Omega(x_t)$ is a safe policy \cite{Blanchini2015}. 
    
    %To find a better policy, we optimize over the set of safe policies: 
    We define the set of safe policies with respect to the RCI $\S$ as \begin{align}\Pi := \{\pi: \R^n \rightarrow \R^m \mid \pi(x_t) \in \Omega(x_t), \ \forall\ x_t \in \S\}. \end{align} Given $\S$, we search for a policy by optimizing over $\Pi$: 
    \begin{subequations} \label{eqn:opt_performance}
    \begin{gather}
        \min_{\pi \in \Pi} \mathbb{E}_{x_0 \in \S, d_t \in \D} \bigg[ \frac{1}{T} \sum_{t = 1}^T J(x_t,u_t) \bigg] \label{eqn:9-30-1}\\
        \text{subject to: } %x_0 \in \S \\
        x_{t+1} = Ax_t + Bu_t + Ed_t\\
        u_t = \pi(x_t)
    \end{gather}
    \end{subequations} where $\mathbb{E}_{x_0 \in \S, d_t \in \D}$ is the expectation with respect to randomness in initial conditions and in the sequence of disturbances, and $J(x_t,u_t)$ is the cost associated with occupying state $x_t$ and taking action $u_t$. To estimate \eqref{eqn:9-30-1}, $d_t$ is sampled from $\D$ but treated as stochastic, so that standard RL algorithms can be used to solve \eqref{eqn:opt_performance} \cite{Wabersich2021}. This relaxation does not require thorough sampling of $\D$ to preserve safety, since the constraint $\pi \in \Pi$ imposes state and input constraint satisfaction for \emph{all} possible disturbances $d_t \in \D$. The solution of \eqref{eqn:opt_performance} depends on the distribution of $d_t$ over $\D$ but safety is guaranteed for any $\pi \in \Pi$.
    % We assume that $x_0$ is uniformly distributed over $\S$.
    
   One example of a cost function that can be used in~\eqref{eqn:opt_performance} is the classical LQR cost on state and control~\cite{hespanha2018linear}, but other non-quadratic cost functions can also be used. For example, for sparsity-promoting controllers, we may set $J(x_t,u_t) = x_t^T Q x_t + c\|u_t\|_1,$ where $Q \succeq 0$ and $c>~ 0$~\cite{dorfler2014sparsity}.  
%   We are not constrained by classical cost functions when solving \eqref{eqn:opt_performance}.

\section{CONTROLLER DESIGN} \label{sec:3}
   \begin{figure}[ht]
        \centering
        \includegraphics[width=8cm]{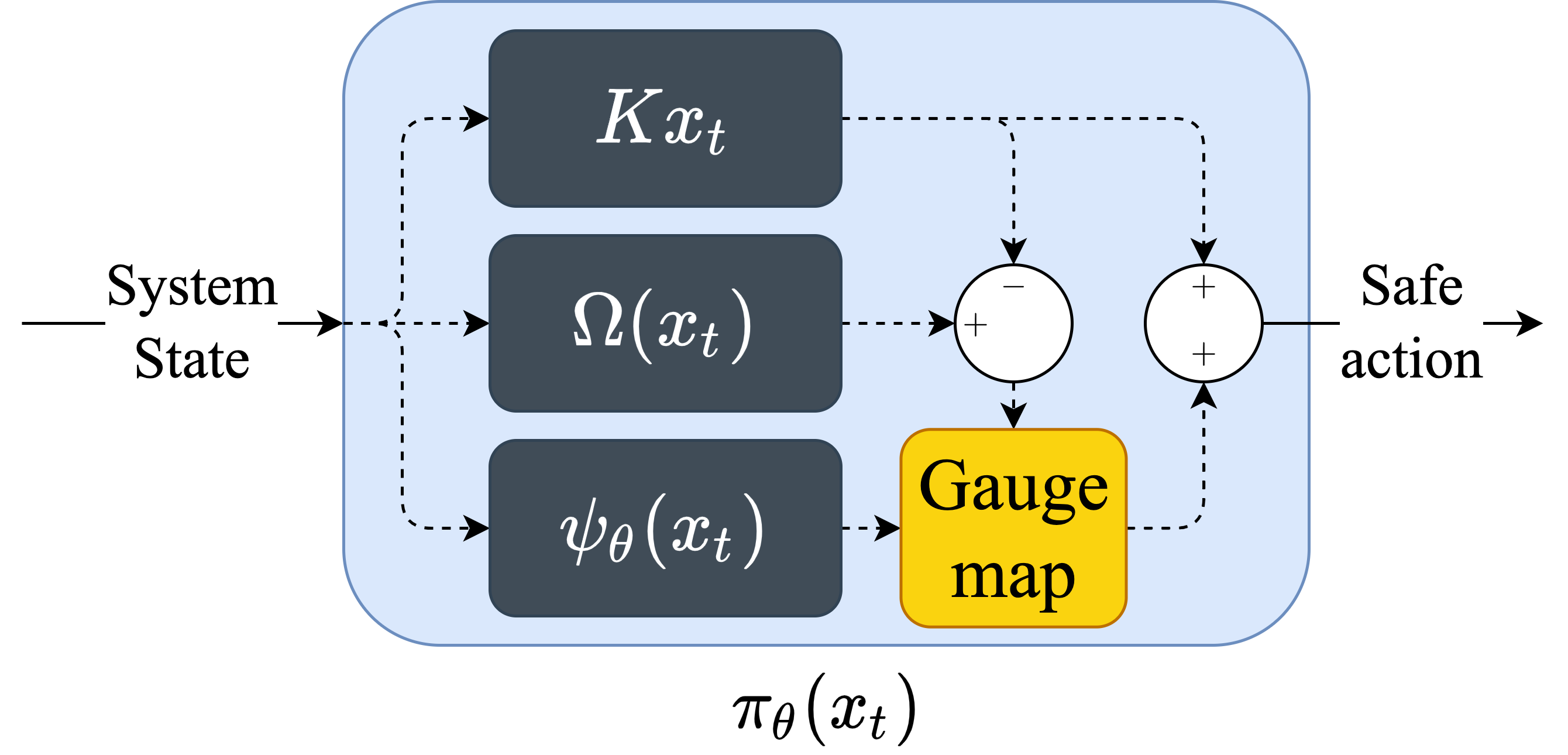}
        \caption{Policy network architecture for safe learning. The components are: $Kx_t$, a safe linear feedback included for numerical stability; $\Omega(x_t),$ the set of safe actions from observed state $x_t$; $\psi_\theta(x_t),$ a neural network; and the closed-form \textit{gauge map} which maps neural network outputs to the current set of safe actions $\Omega(x_t)$.}
        \label{fig:policy_network}
    \end{figure}
    
%\todo{Say a few sentences about using RL to find $\pi$, and how it is parametrized by a neural network.} 

In this section, we describe how set-theoretic control techniques can be used to create a safety guarantee for data-driven controllers without solving an MPC or projection problem in real time. Since $d_t$ is unknown, data-driven approaches for choosing $\pi$ are appropriate if safety guarantees can be maintained. For control problems with continuous state and action spaces, one class of RL algorithms involves parameterizing $\pi$ as a neural network or other function approximator and using stochastic optimization to search over the parameters of that function class for a (locally) optimal policy. 

A common approach to safety-critical control with RL is to combine a model-predictive controller with a neural network providing an action recommendation or warm start \cite{Wabersich2021,Aswani2013}. However, this makes it difficult to search over $\Pi$ efficiently and leads to control policies with higher computational overhead. One optimization-free approach involves tracking the vertices of $\Omega(x_t)$ and using a neural network to choose convex weights on the vertices of $\Omega(x_t)$. However, this is only possible when $\S$ has exceedingly simple geometry \cite{Zheng2020}.
%In this section, we provide an efficient parameterization of $\Pi$ that addresses these shortcomings.
While it is difficult to constrain the output of a neural network to arbitrary polytopes such as $\Omega(x_t)$, it is easy to constrain the output to $\B_\infty$, the $\infty$-norm unit ball in $\R^m$, using activation functions like sigmoid or hyperbolic tangent in the output layer. By establishing a correspondence between points in $\B_\infty$ and points in $\Omega(x_t),$ we will use neural network-based controllers to parameterize $\Pi$.
%\todo{Add in this paragraph or the next that the output of a neural network can be easily mapped into the infinity norm ball, how this is done, and say it's hard to map into other sets.}

%\todo{Talk about Fig.~\ref{fig:policy_network} in this paragraph.}
In particular, we construct a class of safe, differentiable, and closed-form policies $\pi_\theta$, parameterized by $\theta$, that can approximate any policy in $\Pi$. The policy first chooses a ``virtual'' action in $\B_\infty$ using a neural network $\psi_\theta$. The policy then uses a novel, closed-form, differentiable ``safety filter'' to equate $\psi_\theta(x_t)$ with an action in $\Omega(x_t)$. Figure \ref{fig:policy_network} illustrates the way $\psi_\theta$, $\Omega$, and $\pi_0$ are interconnected using a novel \emph{gauge map} in order to form the policy $\pi_\theta$. In order to efficiently map between $\B_\infty$ and $\Omega(x_t)$, we now introduce the concepts of \emph{C-sets} and \emph{gauge functions}.

% Define C-sets in a definition environment.

\begin{definition}[C-set \cite{Blanchini2015}] \label{def:c-set} A \emph{C-set} is a set that is convex and compact and that contains the origin as an interior point.
\end{definition}

Any C-set can be used as a ``measuring stick'' in a way that generalizes the notion of a vector norm \cite{Blanchini2015}. In particular, the gauge function (or Minkowski function) of a vector $v \in \R^m$ with respect to a C-set $\Q \subset \R^m$ is given by \begin{align} \gamma_\Q(v) = \inf\{\lambda \geq 0 \mid v \in \lambda \Q\}. \end{align} If $\Q$ is a polytopic C-set defined by $\{w \in \R^m \mid F_i^Tw \leq g_i, i = 1,\ldots,r\},$ then the gauge function is given by \begin{align}
    \gamma_\Q(v) = \max_i\Big \{\frac{F_i^Tv}{g_i}\Big \}, \label{eqn:10-7-2}
\end{align} which is easy to compute since it is simply the maximum over $r$ elements.
%\todo{need to explicitly point out that \eqref{eqn:10-7-2} is easy to compute.} 
Equation \eqref{eqn:10-7-2} is derived in Appendix \ref{app:gauge}.
%\todo{Say where exactly in the appendix.}
We will use \eqref{eqn:10-7-2} to construct a closed-form, differentiable bijection between $\B_\infty$ and $\Omega(x_t)$.

%The safety filter is constructed as a bijection between two C-sets. However, there is no guarantee that $0 \in \int(\Omega(x_t))$.

%In the next section, we will first construct a bijection between two C-sets and then explain how to find $u^* \in \int(\Omega(x_t))$, such that $\Omega(x_t) - u^*$ is a C-set.

% Define safety filter, but don’t use it as a common knowledge term. Link back to other papers. Reinforce the fact that this safety filter is closed-form.

\subsection{Gauge map}

% Whenever the two sets are polytopic C-sets, it is easy to map between them. Define the gauge function in-line and provide the gauge function of a polytope. Define the gauge map in paragraph form (with figure \ref{fig:c_set_map}). Give the gauge map its own line and equation number. Say that the gauge function is also called the Minkowski function.

We will first show how to use the gauge function to construct a bijection from $\B_\infty$ to any C-set $\Q$, and will then generalize to the case when $\Q$ does not contain the origin as an interior point. For any $v \in \B_\infty$, we define the \emph{gauge map from $\B_\infty$ to $\Q$} as \begin{align}
    G(v|\Q) = \frac{\|v\|_\infty}{\gamma_\Q(v)} \cdot v.
\end{align}

\begin{lemma} \label{lemma:bijection}
For any C-set $\Q$, the gauge map $G: \B_\infty \rightarrow \Q$ is a bijection. Specifically, $w = G(v|\Q)$ if and only if $w$ and $v$ have the same direction and $\gamma_\Q(w) = \|v\|_\infty.$
\end{lemma}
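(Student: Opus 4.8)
The plan is to exploit three standard properties of the gauge function of a C-set $\Q$: (i) positive homogeneity, $\gamma_\Q(\alpha v) = \alpha\,\gamma_\Q(v)$ for every $\alpha \geq 0$; (ii) positivity, $\gamma_\Q(v) > 0$ whenever $v \neq 0$, which follows from compactness of $\Q$ and guarantees that the scaling in $G$ has a nonzero denominator, together with finiteness of $\gamma_\Q$ everywhere, which follows from the origin being an interior point; and (iii) the membership characterization $\gamma_\Q(v) \leq 1 \iff v \in \Q$. I would also record the key observation that $\B_\infty$ is itself a C-set whose gauge is exactly $\gamma_{\B_\infty}(v) = \|v\|_\infty$, so that $G(v|\Q) = \bigl(\gamma_{\B_\infty}(v)/\gamma_\Q(v)\bigr)v$ is simply a rescaling of $v$ along its own direction. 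The indeterminate case $v = 0$ is handled by the convention $G(0|\Q) = 0$, consistent with continuity, and since $0 \in \Q$ this causes no difficulty.

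First I would prove the ``specifically'' characterization, from which the bijection follows almost immediately. For the forward implication, suppose $w = G(v|\Q)$ with $v \neq 0$. Because $w$ is the nonnegative multiple $(\|v\|_\infty/\gamma_\Q(v))\,v$ of $v$, the two vectors share the same direction. Applying positive homogeneity,
$$\gamma_\Q(w) = \frac{\|v\|_\infty}{\gamma_\Q(v)}\,\gamma_\Q(v) = \|v\|_\infty,$$
which also shows $\gamma_\Q(w) = \|v\|_\infty \leq 1$, so by property (iii) $G$ indeed maps $\B_\infty$ into $\Q$. For the reverse implication, assume $w$ and $v$ point in the same direction, say $w = c\,v$ with $c \geq 0$, and that $\gamma_\Q(w) = \|v\|_\infty$. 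Homogeneity gives $c\,\gamma_\Q(v) = \|v\|_\infty$, hence $c = \|v\|_\infty/\gamma_\Q(v)$ and therefore $w = G(v|\Q)$.

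With the characterization in hand, bijectivity is settled by showing that each $w \in \Q$ has exactly one preimage. Fixing $w \neq 0$, any preimage $v$ must point in the direction of $w$ and satisfy $\|v\|_\infty = \gamma_\Q(w)$; these two requirements pin down a unique vector, namely $v = \bigl(\gamma_\Q(w)/\|w\|_\infty\bigr)w$, giving injectivity. This $v$ lies in $\B_\infty$ because $\|v\|_\infty = \gamma_\Q(w) \leq 1$ (as $w \in \Q$), giving surjectivity, and a direct substitution confirms $G(v|\Q) = w$. Notice that the inverse map is itself a gauge map, $G^{-1}(w|\Q) = (\gamma_\Q(w)/\gamma_{\B_\infty}(w))\,w = G(w|\B_\infty)$, reflecting a pleasant symmetry between $\Q$ and $\B_\infty$.

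I expect the main obstacle to be the bookkeeping around the origin and the positivity of the scaling factors rather than any deep step: I must ensure $\gamma_\Q(v) \neq 0$ for $v \neq 0$ (using compactness) so that $G$ is well defined, verify that the scaling factors are genuinely nonnegative so that ``same direction'' is preserved rather than reversed, and confirm the boundary behavior $\gamma_\Q(w) = 1 \iff w \in \partial\Q$ if one wants the map additionally to send the unit sphere of $\B_\infty$ onto $\partial\Q$. Everything else reduces to a single application of positive homogeneity together with the membership property (iii).
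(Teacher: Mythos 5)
Your proof is correct, and it turns on the same two facts as the paper's: positive homogeneity of the gauge function and the observation that $G$ merely rescales $v$ along its own direction. The organization is genuinely different, though. The paper proves injectivity and surjectivity directly, and in a more general setting: it defines $G(v|\P,\Q) = \frac{\gamma_\P(v)}{\gamma_\Q(v)}v$ for two polytopic C-sets $\P,\Q$, shows this is a bijection from $\P$ to $\Q$, and specializes to $\P = \B_\infty$ at the end; the ``specifically'' clause of the lemma is left implicit in its surjectivity construction. You do the reverse: you prove the if-and-only-if characterization first, then obtain bijectivity by exhibiting the unique preimage $v = \bigl(\gamma_\Q(w)/\|w\|_\infty\bigr)w$ in closed form, which also identifies the inverse as the opposite gauge map from $\Q$ to $\B_\infty$ --- the same $\P$--$\Q$ symmetry the paper encodes by working with a general $\P$, here obtained as a corollary rather than assumed as the setting. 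Two further points in your favor: your argument nowhere needs $\Q$ to be polytopic, so it proves the lemma for arbitrary C-sets exactly as stated (the paper's proof, as written, restricts both sets to be polytopic even though nothing in it really uses that structure); and you are more careful at the edge cases, fixing the convention $G(0|\Q)=0$ and justifying $\gamma_\Q(v)>0$ for $v \neq 0$ via compactness --- facts the paper's injectivity step (``equal images force $v_2 = \beta v_1$ with $\beta \ge 0$'') uses silently. What the paper's treatment buys is a single statement covering gauge maps between any pair of polytopic C-sets, reusable elsewhere; what yours buys is a self-contained proof of the exact lemma as stated, including the characterization sentence, together with an explicit formula for $G^{-1}$.
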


The proof of Lemma \ref{lemma:bijection} is provided in Appendix \ref{app:bijection}. By Lemma \ref{lemma:bijection}, choosing a point in $\B_\infty$ is equivalent to choosing a point in $\Q$. The action of the gauge map is illustrated in Figure \ref{fig:c_set_map}. 

We cannot directly use the gauge map to convert between points in $\B_\infty$ and points in $\Omega(x_t)$, since $\Omega(x_t)$ may not contain the origin as an interior point. Instead, we must temporarily ``shift'' $\Omega(x_t)$ by one of its interior points, making it a C-set. Lemma \ref{lemma:intpoint} provides an efficient way to achieve this.

    \begin{figure}
        \centering
        \includegraphics[width=8cm]{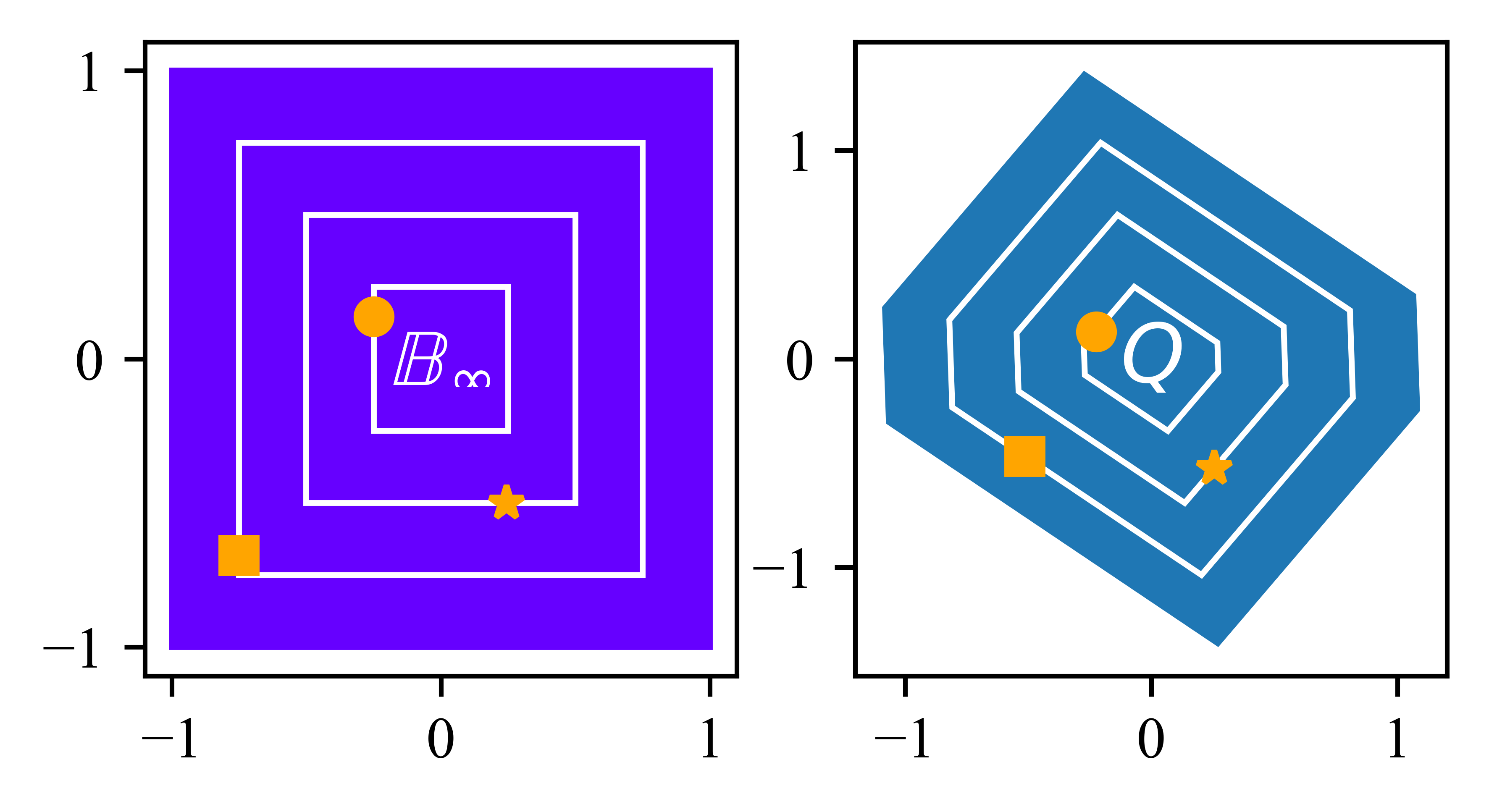}
        \caption{Action of the gauge map from $\B_\infty$ to randomly generated $\Q$, with the $\frac{1}{4}, \frac{1}{2},$ and $\frac{3}{4}$ level sets of the respective gauge functions shown in white. For each point $v \in \B_\infty$ and its image $w \in \Q$, $v$ and $w$ have the same direction and $\gamma_\Q(w) = \|v\|_\infty$.}
        \label{fig:c_set_map}
    \end{figure}

\begin{lemma} \label{lemma:intpoint}
If $\pi_0(x) = Kx$ is a policy in $\Pi$, then for any $x_t$ in the interior of $\S$, the set $\hat{\Omega}_t := [\Omega(x_t) - Kx_t]$ is a C-set.
\end{lemma}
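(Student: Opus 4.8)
The plan is to verify the three defining properties of a C-set (Definition \ref{def:c-set}) for $\hat{\Omega}_t$: convexity, compactness, and containing the origin as an interior point. Since $\hat{\Omega}_t$ is merely the translate of $\Omega(x_t)$ by $-Kx_t$, and translation preserves both convexity and compactness, it suffices to establish those two properties for $\Omega(x_t)$ and then to check the interior-point condition, which amounts to showing $Kx_t \in \int(\Omega(x_t))$.

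For convexity and compactness I would first rewrite $\Omega(x_t)$ as a polytope. The requirement in \eqref{eqn:10-8-2} that $x_{t+1}\in\S$ for \emph{all} $d_t\in\D$ is a robust linear constraint; for each halfspace defining $\S$ I would replace $d_t$ by its worst case, i.e. evaluate the support function $\max_{d\in\D}$ of a linear functional over the polytope $\D$, which is a finite constant attained at a vertex. This collapses the ``for all $d_t$'' into finitely many linear inequalities in $u_t$, giving $\Omega(x_t) = \{u \in \U : Bu \in (\S\ominus E\D) - Ax_t\}$, where $\S\ominus E\D$ is the eroded (tightened) state set. As a finite intersection of halfspaces contained in the bounded set $\U$, $\Omega(x_t)$ is a closed, bounded polytope, hence convex and compact, and translation yields the same for $\hat{\Omega}_t$.

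The substantive step is $Kx_t\in\int(\Omega(x_t))$. I would reduce this to two sufficient conditions, $Kx_t\in\int(\U)$ and $(A+BK)x_t\in\int(\S\ominus E\D)$; a routine ball argument, using continuity of $u\mapsto Ax_t+Bu$, shows these imply $Kx_t\in\int(\Omega(x_t))$. To obtain each, I would exploit the \emph{linearity} of $\pi_0$ together with the interiority of $x_t$: for $x_t\neq 0$, since $x_t\in\int(\S)$ there is a scale $\lambda>1$ with $\lambda x_t\in\S$. Because $\pi_0\in\Pi$, evaluating the safety guarantee at $\lambda x_t$ gives $\lambda Kx_t\in\U$ and $\lambda(A+BK)x_t\in\S\ominus E\D$. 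Writing $Kx_t=(1-\tfrac1\lambda)\,0+\tfrac1\lambda(\lambda Kx_t)$ and $(A+BK)x_t=(1-\tfrac1\lambda)\,0+\tfrac1\lambda\big(\lambda(A+BK)x_t\big)$ expresses each as a convex combination placing strictly positive weight on the origin; by the standard fact that the segment from an interior point of a convex set to any point of the set lies in its interior, both memberships become \emph{strict}, as required.

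The main obstacle, and the point where a mild non-degeneracy hypothesis enters, is guaranteeing that the origin is itself interior to $\U$ and to $\S\ominus E\D$, which the convex-combination argument needs. That $0\in\int(\U)$ is the natural assumption $\bar{u}>0$. For the eroded set, evaluating $\pi_0\in\Pi$ at $x=0$ gives $0=\pi_0(0)\in\Omega(0)$, i.e. $E\D\subseteq\S$, so $0\in\S\ominus E\D$; upgrading this to $0\in\int(\S\ominus E\D)$ -- equivalently $E\D\subseteq\int(\S)$, i.e. the disturbance image never reaches the boundary of the safe set -- is precisely what makes the claim hold even at $x_t=0$, where the scaling is vacuous and interiority must come directly from this condition. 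I expect the cleanest writeup to state this assumption explicitly, noting that it holds for the full-dimensional RCI produced by the SDP of \cite{Liu2015}; the $x_t=0$ case then follows at once from $0\in\int(\U)$ and $0\in\int(\S\ominus E\D)$ via the same reduction. The conceptual crux throughout is converting the \emph{non-strict} (boundary-inclusive) membership guaranteed by $\pi_0\in\Pi$ into \emph{strict} interiority, which the linearity of $\pi_0$ and the scaling make possible.
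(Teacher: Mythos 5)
Your proof is correct under the non-degeneracy hypotheses you state, and it takes a genuinely different---and in one respect more careful---route than the paper. The paper argues by contradiction: writing $\Omega(x) = \{u \mid Hu \leq h,\ F(Ax+Bu) \leq g\}$, it assumes $Kx \in \bd(\Omega(x))$, so some constraint is tight, say $F_i^T(A+BK)x = g_i$; it then scales the state to $y = \alpha x$ with $\alpha = 1+\varepsilon\,\textbf{sign}(g_i)$, so that $y \in \int(\S)$ while $F_i^T(A+BK)y = \alpha g_i > g_i$, i.e.\ $Ky \notin \Omega(y)$, contradicting $\pi_0 \in \Pi$. Your argument runs forward instead: scale up to $\lambda x_t \in \S$, pull the membership $K(\lambda x_t) \in \Omega(\lambda x_t)$ back toward the origin by a convex combination, and invoke interiority of the origin in $\U$ and in the tightened set $\S \ominus E\D$. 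Both proofs rest on the same two ingredients (linearity of $\pi_0$ and scaling of an interior state), but they diverge exactly on the degenerate cases you worry about, and your caution is warranted: the paper's argument silently fails when the tight constraint has zero right-hand side, since $g_i = 0$ gives $\textbf{sign}(g_i) = 0$, hence $\alpha = 1$ and no contradiction. Moreover, the lemma as literally stated is false without some such hypothesis: take $n=m=p=1$, $x^+ = x + u + d$, $\S = \D = [-1,1]$, $\U = [-2,2]$, $K = -1$; then $\pi_0 \in \Pi$, but the robust constraint forces $x_t + u = 0$, so $\Omega(x_t) = \{-x_t\}$ is a singleton and $\hat{\Omega}_t = \{0\}$ is not a C-set---precisely the case $E\D \not\subseteq \int(\S)$ (equivalently $g = 0$) that your assumption excludes. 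So your explicit hypotheses ($\bar{u} > 0$ and $E\D \subseteq \int(\S)$) are not an artifact of your method; they, or something equivalent, are implicitly required by the paper's proof as well. What the paper's route buys is brevity and no need to decompose $\Omega(x_t)$ into input and tightened-state constraints; what your route buys is a direct (non-contradiction) argument in which the hidden non-degeneracy assumption is surfaced and stated.
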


The proof of Lemma \ref{lemma:intpoint} is provided in Appendix \ref{app:intpoint}. Figure \ref{fig:safety_filter} illustrates the way the gauge map and Lemma \ref{lemma:intpoint} are used in the policy network as a safety filter, by transforming the output of the policy network from $\B_\infty$ to $\Omega(x_t)$. 

\subsection{Policy architecture}

\begin{theorem} \label{thm:main}
Assume the system dynamics and constraints are given by \eqref{eqn:linear_system}, \eqref{eqn:10-7-3} and \eqref{eqn:10-7-4}, and there exists a choice of $(\S,\pi_0)$ conforming to \eqref{eqn:10-7-1} and \eqref{eqn:10-8-1}. Let $\psi_\theta: \S \rightarrow \B_\infty$ be a neural network parameterized by $\theta$. Then for any $x_t$ in the interior of $\S$, the policy
    \begin{align}
        \pi_\theta(x_t) := G(\psi_\theta(x_t) | \hat{\Omega}_t) + Kx_t \label{eqn:10-6-1}
    \end{align} has the following properties. \begin{enumerate}
        \item $\pi_\theta$ is a safe policy.
        \item $\pi_\theta$ can be computed in closed form.
        \item $\pi_\theta$ is differentiable at $x_t$. \label{thm:diff}
        \item $\pi_\theta$ can approximate any policy in $\Pi$.
    \end{enumerate}
\end{theorem}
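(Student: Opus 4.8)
The plan is to dispatch the four claims in order of increasing difficulty, reusing the two preceding lemmas as black boxes. For safety (claim 1), I would first invoke Lemma \ref{lemma:intpoint} to guarantee that $\hat{\Omega}_t = \Omega(x_t) - Kx_t$ is a C-set whenever $x_t$ lies in the interior of $\S$; Lemma \ref{lemma:bijection} then says $G(\cdot\mid\hat{\Omega}_t)$ is a bijection onto $\hat{\Omega}_t$. Since $\psi_\theta(x_t) \in \B_\infty$, its image lands in $\hat{\Omega}_t$, so $\pi_\theta(x_t) = G(\psi_\theta(x_t)\mid\hat{\Omega}_t) + Kx_t \in \Omega(x_t)$. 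A one-line induction (as already noted after \eqref{eqn:10-8-2}) upgrades membership in $\Omega(x_t)$ at each step to safety of the whole trajectory.

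For the closed-form claim (claim 2), the only object not manifestly explicit is $\Omega(x_t)$ itself, so I would first derive its halfspace representation. Pushing the requirement $x_{t+1}\in\S$ through the dynamics and eliminating the universal quantifier over $d_t$ row by row, the $i$-th constraint becomes $V_{s,i}B\,u \le \bar{s}_i - V_{s,i}A x_t - \sigma_{\D}(E^{\!\top} V_{s,i}^{\!\top})$, where $\sigma_{\D}(c)=\max_{d\in\D} c^{\!\top} d$ is the support function of $\D$; the analogous inequality holds for the lower face. Crucially each support-function term is a constant that can be precomputed offline, so stacking these with the $\U$-constraints yields $\Omega(x_t)=\{u\mid Hu\le h(x_t)\}$ with $H$ fixed and $h$ affine in $x_t$. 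Then $\hat{\Omega}_t=\{w\mid Hw\le h(x_t)-HKx_t\}$ is polytopic with offsets affine in $x_t$, its gauge is the explicit maximum-of-ratios \eqref{eqn:10-7-2}, and every ingredient of $\pi_\theta$ (the network, the two maxima defining $G$, and the affine term $Kx_t$) is in closed form.

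The differentiability claim (claim 3) is the main obstacle, because the two pointwise maxima $\|v\|_\infty$ and $\gamma_{\hat{\Omega}_t}(v)$ are the only nonsmooth pieces. I would partition the domain by the identity of the maximizing index in each max: on the open region where both are attained uniquely, each max is a single affine function, $G$ reduces to a smooth rational map, and the chain rule (using smooth activations in $\psi_\theta$ and the affine dependence of $h$ on $x_t$) gives differentiability of $\pi_\theta$, with the derivative computed by the quotient rule. I would then argue that the nonsmoothness is negligible: the set of $v\in\B_\infty$ at which $G$ fails to be differentiable—namely ties in either argmax, together with the origin $v=0$, where the positively homogeneous (degree one) map $G$ is generically nondifferentiable—is closed and has measure zero. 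Pulling this back through the smooth map $\psi_\theta$, one concludes that $\pi_\theta$ is differentiable at generic interior $x_t$, which is all that gradient-based training requires; this measure-zero nonsmoothness is the honest caveat behind the statement.

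Finally, for the universal-approximation claim (claim 4) I would exploit that $G(\cdot\mid\hat{\Omega}_t)$ is a continuous bijection with continuous inverse (Lemma \ref{lemma:bijection}) that, by the closed-form offsets above, depends continuously on $x_t$ on any compact subset of the interior of $\S$ (where no facet of $\hat{\Omega}_t$ degenerates). Given a target continuous $\pi^\ast\in\Pi$, I would set $v^\ast(x):=G^{-1}\big(\pi^\ast(x)-Kx \,\big|\, \hat{\Omega}(x)\big)$, which is well defined and lands in $\B_\infty$ precisely because $\pi^\ast(x)\in\Omega(x)$ implies $\pi^\ast(x)-Kx\in\hat{\Omega}(x)$, and which is continuous by joint continuity of $G^{-1}$. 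The universal approximation theorem then furnishes a network $\psi_\theta$ matching $v^\ast$ to within any $\epsilon$ in sup-norm, and continuity of $G$ transfers this to $\|\pi_\theta-\pi^\ast\|$ small. The point requiring the most care here is the joint continuity (and non-degeneracy) of the gauge map as $x_t$ varies, which is why restricting to compact subsets of the interior is essential.
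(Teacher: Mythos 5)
Your proposal is correct, and for claims 1, 2, and 4 it follows essentially the paper's own route: safety is the same two-lemma argument (Lemma \ref{lemma:intpoint} to make $\hat{\Omega}_t$ a C-set, Lemma \ref{lemma:bijection} to land $G(\psi_\theta(x_t)|\hat{\Omega}_t)$ inside it); your support-function derivation of the halfspace form of $\Omega(x_t)$, with the $\max_{d\in\D}$ terms precomputed as constants, is precisely the paper's auxiliary Lemma \ref{lemma:polytope}, which combined with \eqref{eqn:10-7-2} gives the closed form; and for approximation the paper compresses into two sentences (universal approximation of $\psi_\theta$ plus Lemma \ref{lemma:bijection}) what you spell out via the explicit target $v^\ast(x) = G^{-1}\big(\pi^\ast(x)-Kx \mid \hat{\Omega}(x)\big)$ --- your attention to the joint continuity of the gauge map in $x$ on compact subsets of $\int(\S)$ fills a gap the paper leaves implicit. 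The one genuine divergence is claim 3. The paper does not attempt your stratification-by-argmax, measure-zero argument; it instead invokes the standard convex-analysis fact that the piecewise-linear gauge \eqref{eqn:10-7-2} has a well-defined \emph{subdifferential} everywhere, and observes that automatic differentiation then produces a subgradient of $\pi_\theta$ with respect to $\theta$. Both treatments tacitly concede the same point --- the literal claim ``differentiable at every interior $x_t$'' is too strong, since $G$ has kinks wherever either maximum in its definition is attained by more than one index (and at the origin) --- but they buy different things: the paper's subdifferential route is shorter and matches what policy-gradient training actually consumes (autodiff subgradients), while your generic-differentiability argument pins down exactly where classical smoothness fails. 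Either resolution is acceptable; your explicit caveat is, if anything, the more honest reading of what claim 3 can actually deliver.
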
 We will comment briefly on the last property and leave the proof of Theorem \ref{thm:main} to Appendix \ref{app:thm}. 
The ability of $\pi_\theta$ to approximate any policy in $\Pi$ given proper choice of $\theta$ is based on the function approximation properties of $\psi_\theta$ \cite{Hornik1989} and the ability of the gauge map to establish a one-to-one correspondence between points in $\B_\infty$ and actions in $\Omega(x_t)$. %\todo{say where in Appendix.} 
% Figure \ref{fig:policy_network} illustrates policy network architecture, while Figure \ref{fig:safety_filter} illustrates the way the gauge map is used in the policy network as a safety filter. 

    \begin{figure}
        \centering
        \includegraphics[width=8cm]{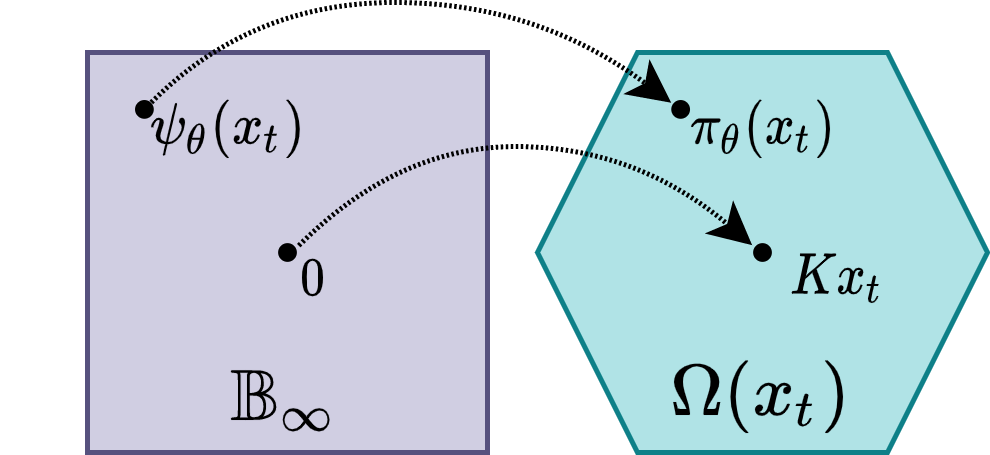}
        \caption{In the policy network, the gauge map is used to map virtual actions to safe actions.}
        \label{fig:safety_filter}
    \end{figure}

\subsection{Policy optimization through reinforcement learning}

% Explain how and why we choose $\theta$ using RL.
%Using \eqref{eqn:10-6-1}, we parametrize the search over $\Pi$ using a neural network with parameter $\theta$.
We parametrize the search over $\Pi$ using \eqref{eqn:10-6-1} with parameter $\theta$, and we choose $\theta$ to optimize \eqref{eqn:opt_performance} using policy gradient RL algorithms.
%Using the structure of the policy in \eqref{eqn:10-6-1}, we parameterize $\Pi$ with parameter $\theta$, and optimize $\theta$ using policy gradient RL algorithms.
% searching over $\theta$ provides an efficient way to search over $\Pi$ for an optimal policy. Since $d_t$ is unknown, data-driven approaches for choosing $\theta$ are appropriate. 
%\todo{Do we use policy gradient? If so, just say it here.}
The policy gradient theorem from reinforcement learning allows one to use past experience to estimate the gradient of the cost function \eqref{eqn:9-30-1} with respect to $\theta$ \cite{Bertsekas2019}. This is a standard approach for RL in continuous control tasks \cite{Lillicrap2016}. 
% Explain why it is important that $\pi_\theta$ is (sub)differentiable. This is part of our strategy to search over $\Pi$.
Policy gradient methods require that it be possible to compute the gradient of $\pi_\theta$ with respect to $\theta$. More specifically, $G$ must be differentiable (Thm. \ref{thm:main}, part \ref{thm:diff}) or else the safety filter would have to be treated as an uncertain influence whose behavior must be estimated from data. The parameter $\theta$ is randomly initialized at the beginning of the policy gradient algorithm.
    
In addition to being differentiable, $\pi_\theta$ has two other noteworthy attributes. First, under the optimal choice of $\theta$, the controller $\pi_\theta$ performs no worse than $\pi_0$. This is because $\pi_0$ is a feasible solution to \eqref{eqn:opt_performance}, so the optimal solution to \eqref{eqn:opt_performance} can do no worse. Second, unlike projection-based methods \cite{Gros2020}, the structure of $\pi_\theta$ facilitates exploration of the interior of the safe action set. This is because smooth functions such as the sigmoid or hyperbolic tangent can be used as activation functions in the output layer of $\psi_\theta$ to constrain its output to $\B_\infty$. By tuning the steepness of the activation function, it is possible to bias the output of $\psi_\theta$ towards or away from the boundary of $\B_\infty$.
\section{SIMULATIONS} \label{sec:4}

\subsection{Power system model}

% Describe 9-bus test system (Figure \ref{fig:power_system}).

% Introduce frequency regulation with continuous-time swing equation model in a per-bus format
The main application considered in this paper is frequency control in power systems. We consider a system with $N$ synchronous electric generators. The standard linearized swing equation at generator $i$ is:
\begin{subequations} \label{eqn:swing}
\begin{align}
\dot{\delta}_i &= \omega_{i} \\
M_i\dot{\omega}_i &= -D_i \omega_i -\sum_{j = 1}^N K_{ij} (\delta_i-\delta_j)+\sum_{k=1}^m b_{ik} u_k - \sum_{l=1}^p e_{il} d_l,
\end{align}
\end{subequations}
where $\delta_i$ is the rotor angle, $\omega_i$ is the frequency deviation, and $M_i$ and $D_i$ are the inertia and damping coefficients of generator $i$. The coefficients $K_{ij}$, $b_{ik}$, and $e_{il}$ are based on generator and transmission line parameters taken from a modified IEEE 9-bus test case, and are computed by solving the DC power flow equations. Thus, the size of the coefficient measures the influence of each element on the dynamics of generator $i$. The quantity $u_k$ represents controller $k$, an IBR such as a battery energy storage system or wind turbine~\cite{Kroposki2017,xu2018optimal}, where the active power injections can be controlled in response to a change in system frequency. The feasible control set $\U \subset \R^m$ represents limits on power output for each of the $m$ IBRs. % Therefore, the controller is bounded by some actuation constraints, such that $u_k$ needs to be in the interval $[\underline{u}_k,\overline{u}_k]$, which includes $0$ as an interior point.

% Describe the disturbances in frequency control. Say how $d_t$ is estimated, and why it gets different treatment (set-valued vs. stochastic).
The disturbance $d_l$ captures the uncertainties both in load and in uncontrolled renewable resources. It is also possible to use $d$ to account for parameteric uncertainties, linearization error associated with the linearized swing equation dynamics, or error associated with the DC power flow approximation, by adding virtual disturbances at every bus in the system \cite{El-Guindy2017a,Bumby2008}. 
%\todo{cite something}
The disturbance set $\D \subset \R^p$ is conservatively estimated from the capacity of the $p$ uncontrolled elements \cite{Chen2011}.
%For the purposes of safety constraints, we make the conservative assumption that the disturbance $d_l$ lies in an interval $[\underline{d}_l,\overline{d}_l]$ containing $0$ as an interior point, estimated from the total capacity of the variable resources \cite{Chen2011}. For the purposes of computing the performance of a controller with respect to a given performance metric, we assume $d_t$ is stochastic and follows a distribution estimated from historical data \cite{Wabersich2021}.
%\todo{anything we can cite here?} 
%This two-fold treatment of disturbances permits the use of standard reinforcement learning algorithms to choose a control policy, without sacrificing safety.

% Give the dynamics and constraints (on $u$ and $d$) for the discrete-time system in block format.
Discretizing the continuous-time system in \eqref{eqn:swing} and assembling block components gives a system in the form of \eqref{eqn:linear_system}. Let $\delta$ and $\omega \in \R^N$ be vectors representing the rotor angles and frequency deviations of all generators in the system, and let the system state be represented by $x = \m{\delta & \omega}^T \in \R^n$ where $n = 2N.$ Using time step $\tau$, the discrete-time system matrices are given by 
\begin{gather*}
A= \m{I & \tau I \\ -\tau M^{-1}K & I - \tau M^{-1}D}, \\
B=\m{0 \\ M^{-1} \hat{B}}, E=\m{0 \\ M^{-1} \hat{E}}
\end{gather*} where $[M]_{ii} = M_i$, $[D]_{ii} = D_i$, $[K]_{ij} = K_{ij}$, $[\hat{B}]_{ik} = b_{ik}$, and $[\hat{E}]_{il} = e_{il}$. 

We simulate the proposed policy network architecture on a 9-bus power system consisting of three synchronous electric generators, three controllable IBRs, and three uncontrolled loads. 
% The DC power flow equations are solved in order to write the system dynamics in the form of \eqref{eqn:swing}. 
The time step for discretization is 0.05 seconds. The load is modeled as autoregressive noise defined by \begin{align}
    d_{t+1} = \alpha d_t + (1-\alpha) \hat{d}
\end{align} where $\hat{d} \in \R^p$ is randomly generated from a uniform distribution over $\D,$ and $\alpha \in (0,1).$ The system is illustrated in Figure \ref{fig:power_system}.
%\todo{Is $d$ a vector?} 
%\todo{On the graph, indicate where the disturbances are coming in.}

\begin{figure}
    \centering
    \includegraphics[width=7cm]{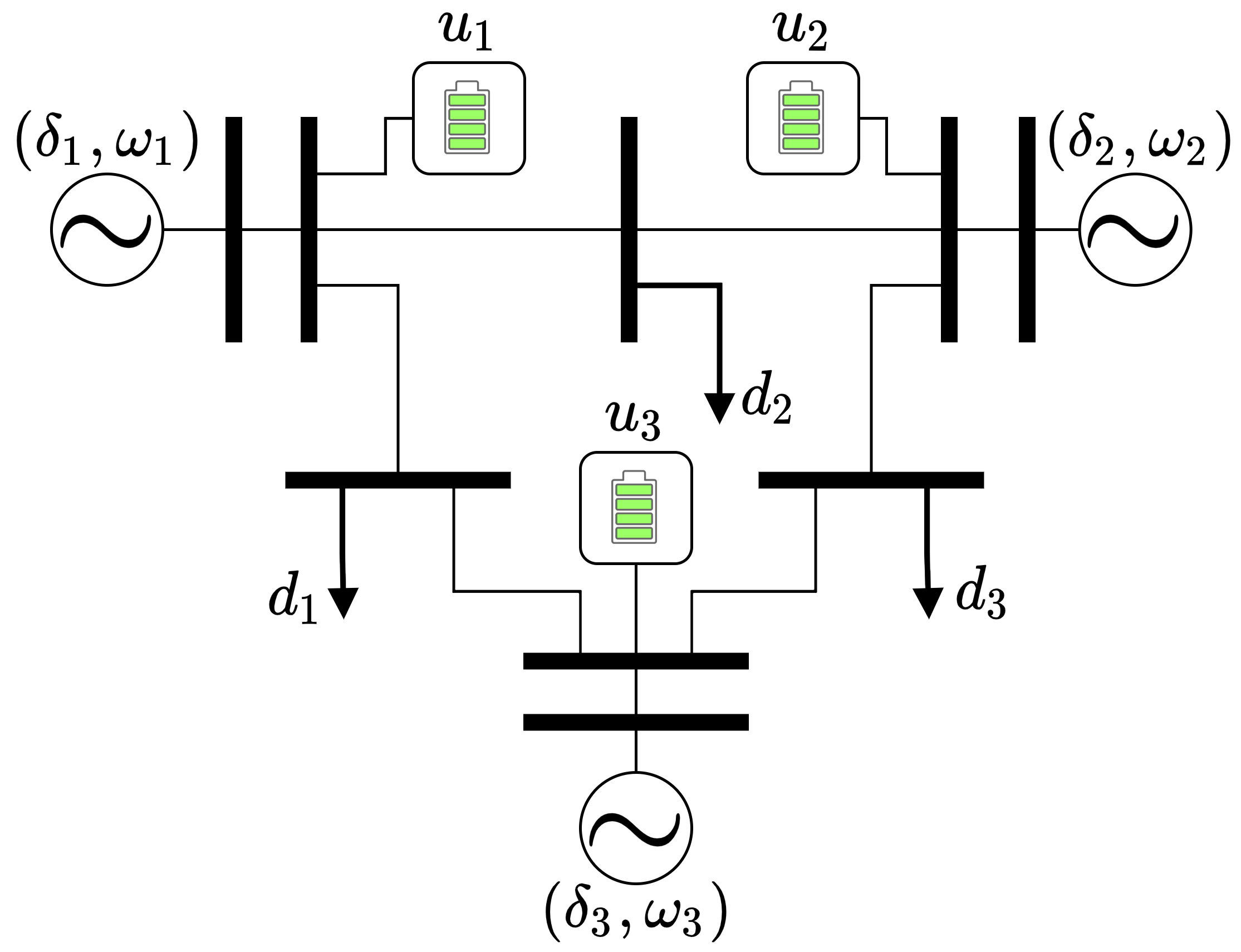}
    \caption{Illustration of 9-bus power system model.}
    \label{fig:power_system}
\end{figure}

\subsection{RL algorithm}

% Describe RL algorithm and parameters
To train the policy network, we used the Deep Deterministic Policy Gradient (DDPG) algorithm \cite{Lillicrap2016}, an algorithm well-suited for RL in continuous control tasks. DDPG is an actor-critic algorithm, in which the ``actor'' or policy chooses actions based on the state of the system, and the ``critic'' predicts the value of state-action pairs in order to estimate the gradient of the cost function \eqref{eqn:9-30-1} with respect to $\theta$ (the ``policy gradient''). In our simulations, the cost was given by \begin{align} J(x_t,u_t) = x^TQx + u^TRu \end{align} where $Q = \textbf{block diag}\{1000 I_N,10 I_N\}$, $R = 5 I_m$, and $I_N$ is the identity matrix in $\R^{N \times N}$. The actor was given by \eqref{eqn:10-6-1}. The function $\psi_\theta$ was parameterized by a neural network with two hidden layers of 256 nodes each, with ReLU activation functions in the hidden layers. We use sigmoid functions in the last layer to limit the the outputs to be within $[-1,1]$. The critic, or value network, had the same hidden layers as $\psi_\theta$ but a linear output layer. We trained the system for 200 episodes of 100 time steps each.

\subsection{Benchmark comparisons}

% Describe benchmarks (linear control; policy network with soft penalty on constraint violations)
To demonstrate the advantages of the proposed policy architecture, we compare against two benchmarks. The first is the linear controller $Kx$, chosen to maximize the size of the associated RCI. Using the algorithm in \cite{Liu2015}, we choose $(\S,K)$ by solving the optimization problem \begin{subequations} \label{eqn:9-29-1}
    \begin{align}
        \max_{\S \in \mathcal{S}, K \in \R^{n \times m}}& \text{vol}(\S) \\
        \text{s.t. Invariance: }&(A + BK)\S \oplus E\D \subseteq \S \\
        \text{Safety: }&\S \subseteq \X \\
        \text{Control bounds: }&K \S \subseteq \U
    \end{align} 
    \end{subequations} where $\oplus$ denotes Minkowski set addition and $\mathcal{S}$ is a class of polytopes described by \eqref{eqn:10-7-1}.
Figure \ref{fig:costs} displays the accumulated cost during a number of test trajectories, showing that $\pi_\theta$ is a more cost-effective controller than $Kx$ when the same $\S$ is used for each policy. This makes sense, since the nonlinear policy is afforded additional flexibility in balancing performance and robustness. Since $\pi_\theta$ and $Kx$ are both policies in $\Pi$, the learned policy has the same safety guarantees as the linear policy.
%\todo{Say what the cost function is somewhere}

\begin{figure}
    \centering
    \includegraphics[width=7cm]{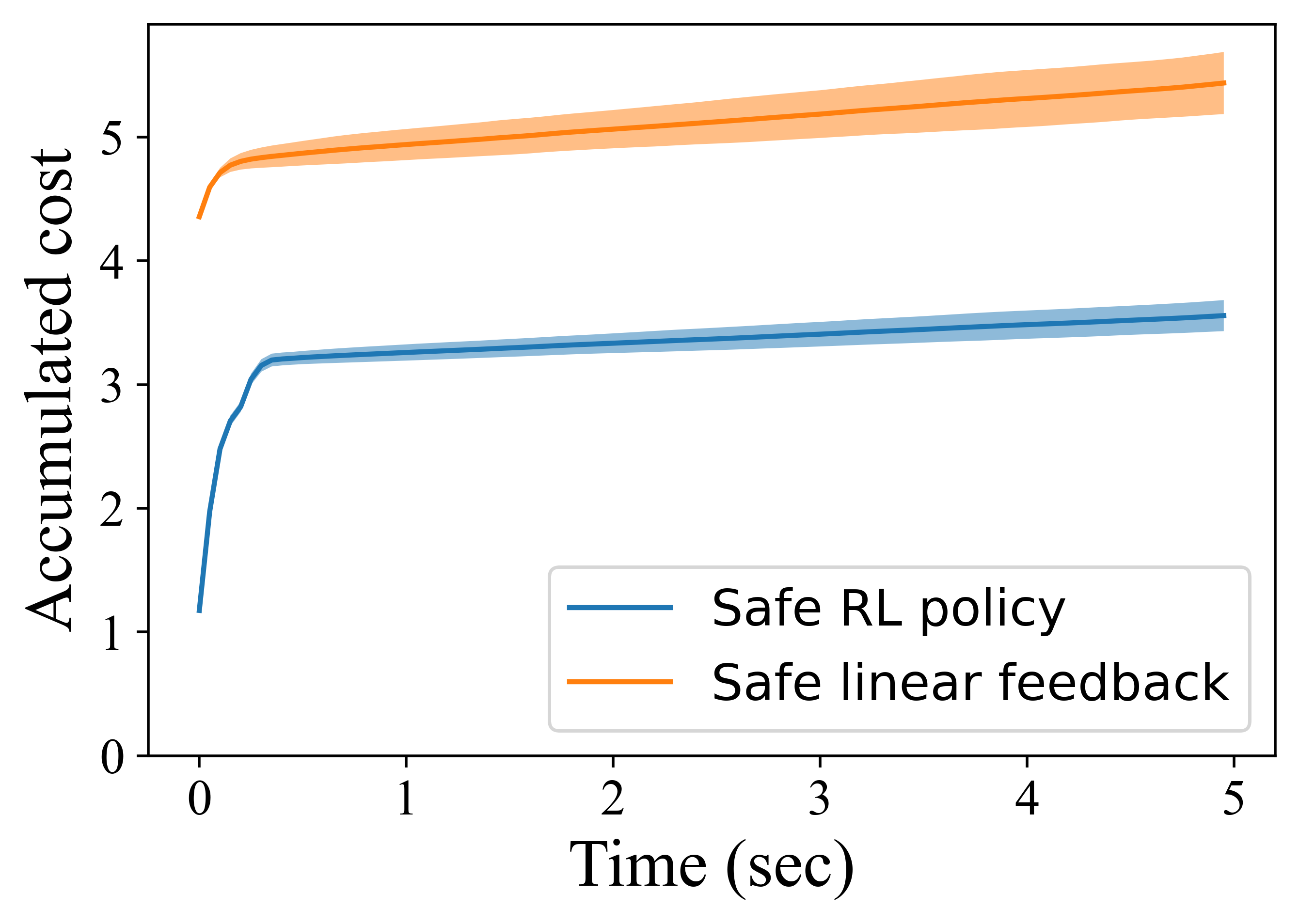}
    \caption{Accumulated cost over several test trajectories with a fixed initial condition and randomly generated disturbance sequences, showing that the RCI policy network achieves better performance than the safe linear feedback.}
    \label{fig:costs}
\end{figure}

The second benchmark is a policy network that is trained using DDPG augmented with a soft penalty on constraint violations, in order to incentivize remaining in $\X$. 
%\todo{Give more information for the baseline policy. Like the number of layers, what the soft penalties are...}
The policy network for this benchmark consists of two 256-node hidden layers with ReLU activation, and hyperbolic tangent activation functions in the output layer that clamp the output to the box-shaped set $\U$. The soft penalty is the total constraint violation, calculated as 
\begin{align*}
    \lambda \|\max\{V_x x_t - \bar{x},0\} - \min\{V_x x_t + \bar{x},0\}\|_1
\end{align*} where the $\max$ and $\min$ are taken elementwise, and $\lambda > 0$.

In Fig.~\ref{fig:max_angle_train}, we plot an example of the maximum angle deviation during training. We place a hard constraint of 0.1 radians on this angle deviation. For the policy given by \eqref{eqn:10-6-1}, the trajectory always stays within this bound, by the design of the the controller. For a policy trained with a soft penalty, trajectories initially exit the safe set. With enough training, the trajectories eventually satisfy the state constraints. 

Figure~\ref{fig:max_angle_test} shows that safety in training does not imply safety in testing. The policy network trained using a soft penalty can still result in constraint violations, whereas the safe policy network does not. In some sense, this is not unexpected. Only a limited number of disturbances can be seen during training, and because of the nonlinearity of the neural network-based policy, it is difficult to provide guarantees from the cost alone. In addition, picking the right soft penalty parameter is nontrivial. If the penalty $\lambda$ is too low, constraint satisfaction will not be incentivized, and if $\lambda$ is too high, convergence issues may arise \cite{Yoo2021}. In our experiments, we tuned $\lambda$ by hand to strike the middle ground, but even automatic, dynamic tuning of $\lambda$ during training is not guaranteed to prevent constraint violations in all cases \cite{Yoo2021}.

% Figure \ref{fig:max_angle_train} shows that even after training this benchmark policy until constraint violations are consistently avoided, violations are still possible (Figure \ref{fig:max_angle_test}). 
%\todo{Is Fig.~\ref{fig:max_angle_test} right? Is the limit 1 (where the dashed lines are)?} \comment{Yes, the limit is 0.1 for angle and 1 for frequency.}
% Soft penalties cannot provide rigorous safety guarantees: if the penalty $\lambda$ is too low, constraint satisfaction will not be incentivized, and if $\lambda$ is too high, convergence issues may arise \cite{Yoo2021}. Even if $\lambda$ is just right, constraint satisfaction will only be incentivized on average, and violations in rare circumstances will not be ruled out.
%\todo{Also, need to say more here. Emphasis why soft penalties can not provide guarantees.} 

% Compare safety performance (PN vs. RCI-PN) during training (Figure \ref{fig:max_angle_train}).

% Compare safety performance (PN vs. RCI-PN) during test (Figure \ref{fig:max_angle_test}).

% Compare costs (RCI-PN vs. linear) during testing (Figure \ref{fig:costs}). 

\begin{figure}
    \centering
    \includegraphics[width=7cm]{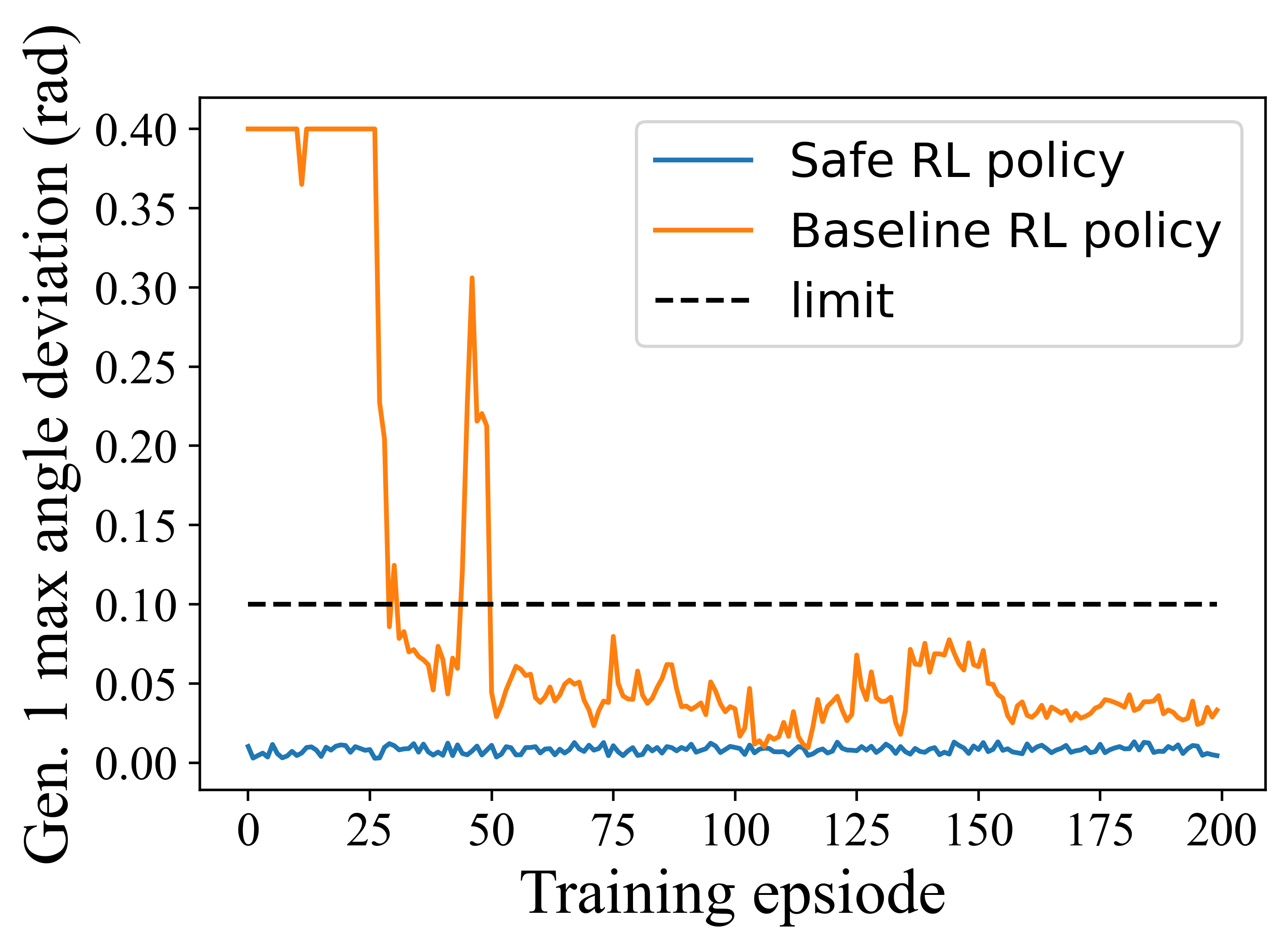}
    \caption{Maximum angle deviation per training episode for the safe policy network (blue) and the baseline policy network with soft penalty (orange). The safe policy network guarantees safety during training, while soft penalties eventually drive the baseline policy towards constraint satisfaction.}
    \label{fig:max_angle_train}
\end{figure}

\begin{figure}
    \centering
    \includegraphics[width=7cm]{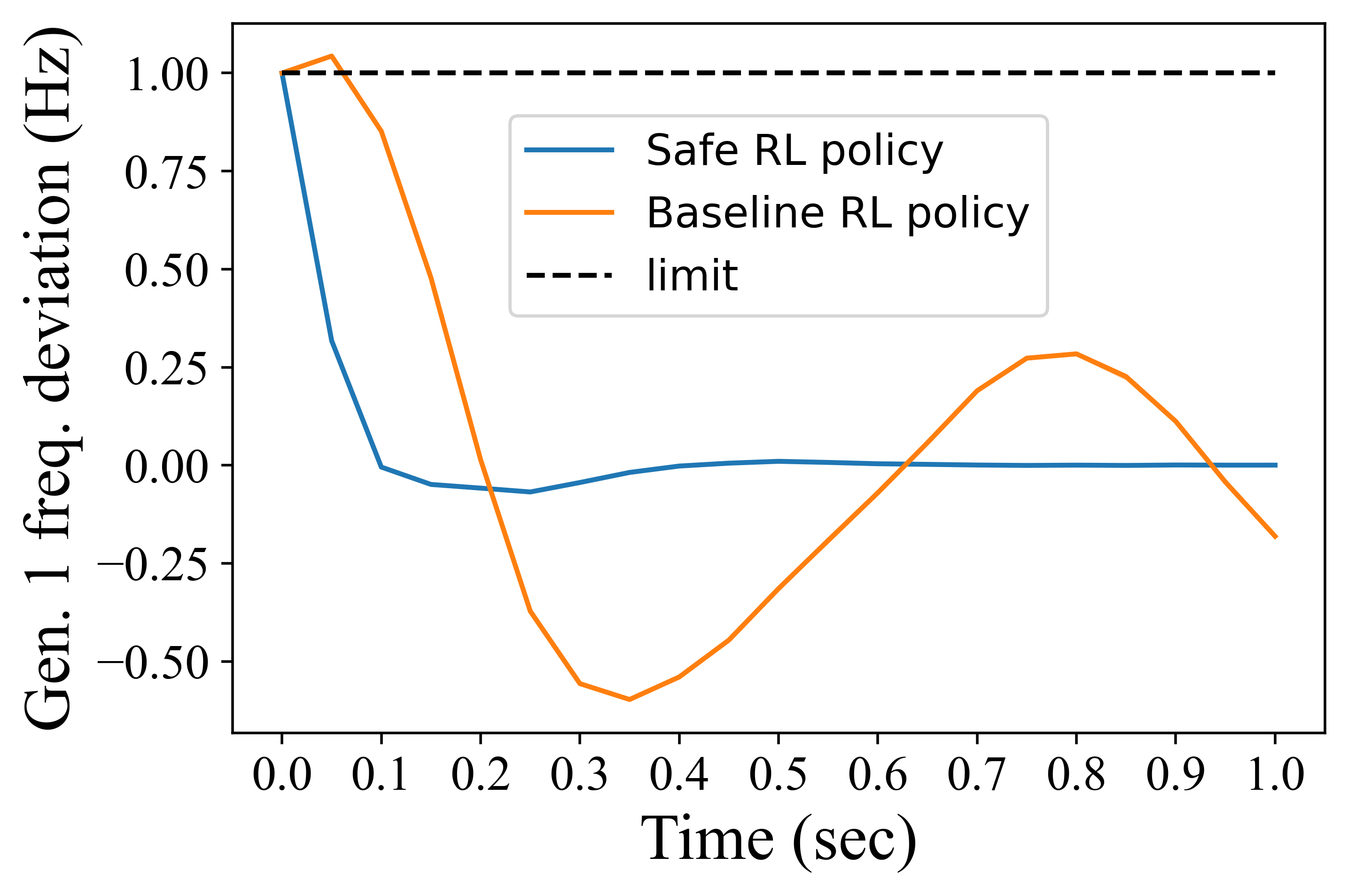}
    \caption{Even though soft penalties succeed in driving policies to be safe during training, they do not necessarily provide safety during testing. In this example, a policy network that was safe during training (Fig.~\ref{fig:max_angle_train}) still exhibits constraint violations during testing. In contrast, starting from the same initial conditions and subject to the same disturbance sequence, the proposed safe policy network guarantees constraint satisfaction.} %\todo{remove the dashed line at the top, put the dashed line where the constraints are.}}
    \label{fig:max_angle_test}
\end{figure}

\section{CONCLUSIONS}

In this paper, we propose an efficient approach to safety-critical, data-driven control. The strategy relies on results from set-theoretic control and convex analysis to provide provable guarantees of constraint satisfaction. Importantly, the proposed policy chooses actions without solving an optimization problem, opening the door to safety-critical control in applications in which computational power may be a bottleneck. We apply the proposed controller to a frequency regulation problem in power systems, but the applications are much more wide-ranging. Future work includes investigating robustness to changes in power system topology and extending the proposed technique to decentralized control.

\addtolength{\textheight}{-3cm}   % This command serves to balance the column lengths
% on the last page of the document manually. It shortens
% the textheight of the last page by a suitable amount.
% This command does not take effect until the next page
% so it should come on the page before the last. Make
% sure that you do not shorten the textheight too much.

\section*{ACKNOWLEDGMENTS}

D. Tabas would like to thank Liyuan Zheng for guidance and Sarah H.Q. Li for helpful discussions. 

\bibliography{references}
\bibliographystyle{ieeetr}

\appendix

\addtolength{\textheight}{-2cm}   % This command serves to balance the column lengths
% on the last page of the document manually. It shortens
% the textheight of the last page by a suitable amount.
% This command does not take effect until the next page
% so it should come on the page before the last. Make
% sure that you do not shorten the textheight too much.

% Derivation of the gauge function for polytopes
\subsection{Derivation of \eqref{eqn:10-7-2}} \label{app:gauge}
Let $\Q = \{x \in \R^n \mid Fx \leq g\}$ be a C-set, where $F \in \R^{r \times n},$ $g \in \R^r$, $F_i^T$ denotes the $i$th row of $F$, and $g_i$ denotes the $i$th element of $g$. The gauge function $\gamma_\Q(v)$ is computed as follows. \begin{align}
    \gamma_\Q(v) &= \inf\{\lambda \geq 0 \mid v \in \lambda \Q\}\\ 
    &= \inf\{\lambda \geq 0 \mid \frac{1}{\lambda} F_i^Tv \leq g_i, i = 1,\ldots, r\}\\
    &= \inf\{\lambda \geq 0 \mid \lambda \geq \frac{F_i^Tv}{g_i}, i = 1,\ldots,r\}\\
    &= \max\{0,\max_i\{\frac{F_i^Tv}{g_i}\}\}. \qedhere
\end{align} We now argue that $\max_i\{\frac{F_i^Tv}{g_i}\} \geq 0.$ If $F_i^Tv < 0 $ for all $i$, then $\Q$ is unbounded in the direction of $v$ and $\Q$ cannot be a C-set, a contradiction. Further, since $0 \in \int(\Q),$ it must hold that $g_i > 0$ for all $i$. Therefore, there exists $i$ such that~$\frac{F_i^Tv}{g_i} \geq 0.$
\hfill \qedsymbol

\subsection{Additional lemmas}
The following lemma will be used in the proofs of Lemma \ref{lemma:intpoint} and Theorem \ref{thm:main}. 

\begin{lemma} \label{lemma:polytope}
Under the assumptions of Theorem \ref{thm:main}, the safe action set $\Omega(x_t)$ is a polytope for all $x_t \in \S$.
\end{lemma}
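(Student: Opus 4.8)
The plan is to start from the definition \eqref{eqn:10-8-2} of $\Omega(x_t)$, substitute the dynamics $x_{t+1} = Ax_t + Bu_t + Ed_t$, and show that the robust membership requirement ``$x_{t+1} \in \S$ for all $d_t \in \D$'' collapses to a finite list of linear inequalities in $u_t$. Since $\S$ is the polytope \eqref{eqn:10-7-1}, the condition $x_{t+1} \in \S$ reads, row by row, as $-\bar{s}_j \leq (V_s)_j (Ax_t + Bu_t + Ed_t) \leq \bar{s}_j$ for $j = 1,\ldots,r$, where $(V_s)_j$ denotes the $j$-th row of $V_s$. The only thing standing between this description and that of a polyhedron is the universal quantifier over $d_t \in \D$.

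To eliminate that quantifier, I would require each inequality to hold for the worst-case disturbance and replace the $d_t$-dependent term by its maximum over $\D$. Concretely, ``$(V_s)_j(Ax_t + Bu_t + Ed_t) \leq \bar{s}_j$ for all $d_t \in \D$'' is equivalent to the single inequality $(V_s)_j(Ax_t + Bu_t) + \max_{d \in \D} (V_s)_j E d \leq \bar{s}_j$, and the lower bound is handled identically using $\max_{d \in \D} \big(\!-(V_s)_j E d\big)$. Because $\D$ is a compact polytope, each of these maxima is a finite constant (the support function of $\D$ in the fixed direction $E^T (V_s)_j^T$, attained at a vertex of $\D$) that depends only on the problem data and not on $u_t$. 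Absorbing the term $(V_s)_j A x_t$, which is constant once $x_t$ is fixed, into the right-hand side, I obtain $2r$ linear inequalities in $u_t$ of the form $(V_s)_j B u_t \leq c_j$.

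Finally, I would intersect these $2r$ halfspaces with the polytope $\U$ of \eqref{eqn:10-7-3}, which contributes its own finitely many linear inequalities. The outcome is an intersection of finitely many halfspaces, i.e., a polyhedron; and since $\Omega(x_t) \subseteq \U$ with $\U$ bounded, $\Omega(x_t)$ is itself bounded, hence a polytope. The step requiring the most care is the quantifier elimination: one must justify that the semi-infinite constraint over all $d \in \D$ is \emph{exactly} equivalent to the single worst-case inequality. This holds because the expression is affine (hence monotone) in the quantity being maximized and $\D$ is compact, so the supremum is finite and attained. This equivalence is the conceptual heart of the argument, even though the computation surrounding it is routine.
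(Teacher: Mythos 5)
Your proposal is correct and follows essentially the same route as the paper's proof: substitute the dynamics into the definition of $\Omega(x_t)$, eliminate the universal quantifier over $d_t \in \D$ by replacing the disturbance term with its worst-case (min/max) value over the compact polytope $\D$, and observe that what remains is a finite set of linear inequalities in $u_t$ intersected with $\U$. Your additional remarks --- that the quantifier elimination is exact because the constraint is affine in $d$, and that boundedness follows from $\Omega(x_t) \subseteq \U$ --- are careful touches the paper leaves implicit, but the argument is the same.
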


\begin{proof}
Starting from \eqref{eqn:linear_system}, \eqref{eqn:10-7-1}, and \eqref{eqn:10-8-2}, the safe action set is \begin{alignat}{2}
    \Omega(x_t) &&= \{u_t \in \U \mid -\bar{s} & \leq V_s x_{t+1} \leq \bar{s}, \ \forall\ d_t \in \D\}\\
    &&= \{u_t \in \U \mid -\bar{s}_i & - \min_{d \in \D} V_s^{(i)T}Ed \nonumber \\ 
    && &\leq  V_s^{(i)T} (Ax_t + Bu_t) \nonumber \\
    && &\leq \bar{s}_i - \max_{d \in \D} V_s^{(i)T}Ed, \nonumber \\
    && & \hspace{1em} \forall\ i = 1,\ldots,r\} \label{eqn:10-8-3}
\end{alignat} where $\bar{s}_i$ is the $i$th element of $\bar{s}$ and $V_s^{(i)T}$ is the $i$th row of $V_s$. Since the $\min$ and $\max$ terms evaluate to constant scalars for each $i$, and since $x_t$ is fixed, \eqref{eqn:10-8-3} is a set of linear inequalities in $u_t$, making $\Omega(x_t)$ a polytope \cite{Blanchini2015}. \hfill \qedhere
\end{proof}

% Proof of lemma that C-set is a bijection
\subsection{Proof of Lemma \ref{lemma:bijection}} \label{app:bijection}
We will prove the more general case in which $\B_\infty$ is replaced by any polytopic C-set. Let $\P$ and $\Q$ be two polytopic C-sets, and define the gauge map from $\P$ to $\Q$ as $G(v|\P,\Q) = \frac{\gamma_\P(v)}{\gamma_\Q(v)} \cdot v$. We will prove that $G$ is a bijection from $\P$ to $\Q$. The proof is then completed by noting that $\gamma_{\B_\infty}$ is the same as the $\infty$-norm.

To prove injectivity, we fix $v_1,v_2 \in \P$ and show that if $G(v_1|\P,\Q) = G(v_2|\P,\Q)$ then $v_1 = v_2$. Assume $G(v_1|\P,\Q) = G(v_2|\P,\Q)$. Then $v_1$ and $v_2$ must be nonnegative scalar multiples of each other, i.e. $v_2 = \beta v_1$ for some $\beta \geq 0$. Making this substitution and applying positive homogeneity of the gauge function \cite{Blanchini2015} yields \begin{align}
    G(v_2|\P,\Q)
    = \frac{\gamma_\P(v_2)}{\gamma_\Q(v_2)}v_2 = \frac{\gamma_\P(v_1)}{ \gamma_\Q(v_1)}v_2.
\end{align} Noting that $G(v_1|\P,\Q)
= \frac{\gamma_\P(v_1)}{\gamma_\Q(v_1)}v_1$, we conclude that $\beta = 1$, thus $v_1 =~v_2$. 

To prove surjectivity, fix $w \in \Q$. We must find $v \in \P$ such that $G(v|\P,\Q) = w$. Since $\P$ and $\Q$ are C-sets, each set contains an open ball around the origin, thus $\P$ and $\Q$ each contain all directions at sufficiently small magnitude. Choose $v$ in the same direction as $w$ such that $\gamma_\P(v) = \gamma_\Q(w).$ Since $w \in \Q$, $v \in \P.$ Then, we have \begin{align}
    G(v|\P,\Q) &= \frac{\gamma_\P(v)}{\gamma_\Q(v)} v \\
    &= \frac{\gamma_\Q(w)}{\gamma_\Q(v)} v.
\end{align} Since $v$ and $w$ are in the same direction, $\frac{v}{\gamma_\Q(v)} = \frac{w}{\gamma_\Q(w)}$. Making this substitution completes the proof. \hfill \qedsymbol

% Proof of lemma that $\Omega(x_t) - \pi_0(x_t)$ is a C-set for all $x_t \in \int(\S)$
\subsection{Proof of Lemma \ref{lemma:intpoint}} \label{app:intpoint}
Let $\int$ and $\bd$ denote the interior and boundary of a set, and rewrite \eqref{eqn:10-8-3} as $\Omega(x) = \{u \in \R^m \mid Hu \leq h, F(Ax + Bu) \leq g \}$. Fix $x \in \int(\S)$ and let $u^* = Kx$. By Lemma \ref{lemma:polytope}, $\Omega(x)$ is convex and compact. To fulfill the properties of a C-set, it remains to show that $u^* \in \int(\Omega(x)).$ Since $\pi_0 \in \Pi$, $u^* \in \Omega(x)$. Assume for the sake of contradiction that $u^* \in \bd(\Omega(x))$. Then either $F_i^T(A+BK)x = g_i$ or $H_j^TKx = h_j$ for some $i$ or $j$, where the subscript denotes a row index. Suppose without loss of generality that the former holds, i.e. $F_i^T(A+BK)x = g_i$ for some $i$. Since $x \in \int(\S),$ there exists $\varepsilon \in (0,1)$ and $\alpha = [1 + \varepsilon \cdot \textbf{sign}(g_i)]$ such that $y = \alpha x$ is also in $\int(\S)$.
The set $\Omega(y)$ is contained in the halfspace $\{u \mid F_i^T(Ay + Bu) \leq g_i\}.$
Evaluating this inequality with $u = Ky$, we have $F_i^T(A + BK)y = \alpha F_i^T(A+BK)x = \alpha g_i > g_i,$ thus $Ky \not \in \Omega(y)$ even though $y \in \S$, contradicting the assumption that $\pi_0 \in \Pi$. We conclude that $u^* \not \in \bd(\Omega(x)).$ Since $u^* \in \Omega(x),$ $u^*$ must be an element of $\int(\Omega(x)).$
\hfill \qedsymbol

\subsection{Proof of Theorem \ref{thm:main}} \label{app:thm} \begin{enumerate}
    \item It suffices to show that the gauge map from $\B_\infty$ to $\hat{\Omega}_t$ is well-defined on $\int(\S).$ This is a direct result of Lemma \ref{lemma:intpoint}.
    \item By Lemmas \ref{lemma:intpoint} and \ref{lemma:polytope}, $\hat{\Omega}_t$ is a polytopic C-set. By \eqref{eqn:10-7-2}, $\gamma_{\hat{\Omega}_t}$ (and $\pi_\theta$) can be computed in closed form.
    \item A standard result from convex analysis shows that the subdifferential of \eqref{eqn:10-7-2} is defined for all $v \in \R^m$, for any polytopic C-set $\Q$ \cite{Blanchini2015}. If $\psi_\theta$ is a neural network, automatic differentiation techniques can be used to compute a subgradient of $\pi_\theta$ with respect to $\theta$ \cite{Gune2018}.
    \item This is due to the fact that $\psi_\theta$ is a universal function approximator for functions from $\S$ to $\B_\infty$ \cite{Hornik1989}. By \eqref{eqn:10-6-1} and Lemma \ref{lemma:bijection}, $\pi_\theta$ approximates any function in $\Pi$.
\end{enumerate}
    % A sufficiently large neural network $\psi_\theta$ with sigmoid or tanh activation in the output layer is a universal function approximator for functions from a bounded subset of $\R^n$ to $\B_\infty$. Choosing a point in $\B_\infty$ is equivalent to choosing a point in $\Omega(x_t)$ (by the other lemma).

\end{document}